\newcommand{\bra}[1]{{\left\langle{#1}\right\vert}}
\newcommand{\ket}[1]{{\left\vert{#1}\right\rangle}}
\newcommand{\qw}[1][-1]{\ar @{-} [0,#1]}
\newcommand{\gate}[1]{*+<.6em>{#1} \POS ="i","i"+UR;"i"+UL **\dir{-};"i"+DL **\dir{-};"i"+DR **\dir{-};"i"+UR **\dir{-},"i" \qw}
\newcommand{\multigate}[2]{*+<1em,.9em>{\hphantom{#2}} \POS [0,0]="i",[0,0].[#1,0]="e",!C *{#2},"e"+UR;"e"+UL **\dir{-};"e"+DL **\dir{-};"e"+DR **\dir{-};"e"+UR **\dir{-},"i" \qw}
\newcommand{\ghost}[1]{*+<1em,.9em>{\hphantom{#1}} \qw}
\newcommand{\Qcircuit}{\xymatrix @*=<0em>}
\newtheorem{thm}{Theorem}[section]
\newtheorem{de}{Definition}
\theoremstyle{definition}
\theoremstyle{remark}
\newtheorem{bb}{\emph{\textbf{Problem}}}
\definecolor{webgreen}{rgb}{0,.5,0}
\definecolor{webblue}{rgb}{0,0,.5}
\declaretheoremstyle[notefont=\bfseries,notebraces={}{},%
    headpunct={},postheadspace=1em]{mystyle}
\declaretheorem[style=mystyle,numbered=no,name=Theorem]{thm-hand}
\def\dist{{\textbf{dist}}}
\def\TV{{\text{TV}}}
\def\MCG{{\text{MCG}}}
\def\WRT{{\text{WRT}}}
\def\H{\mathcal{H}}
\def\D{\mathcal{D}}
\def\RR{\mathbb{R}}
\def\CC{\mathbb{C}}
\def\P{\text{P}}
\def\NP{\text{NP}}
\def\FP{\text{FP}}
\newcommand{\innerprod}[2]{\langle #1| #2\rangle}
\newcommand{\opn}[1]{\operatorname{#1}}
\newcommand{\one}{\mathds 1}
\newcommand{\expref}[2]{\texorpdfstring{\hyperref[#2]{#1~\ref{#2}}}{#1~\ref{#2}}}
\begin{document}
\setlength{\textheight}{8.0truein}    

\runninghead{Quantum invariants of 3-manifolds and NP vs \#P}{Gorjan Alagic and Catharine Lo}

\normalsize\textlineskip
\thispagestyle{empty}
\setcounter{page}{125}

\copyrightheading{17}{1\&2}{2017}{0125--0146}

\vspace*{0.88truein}

\alphfootnote

\fpage{125}


\centerline{\bf QUANTUM INVARIANTS OF 3-MANIFOLDS AND NP VS \#P}
\vspace*{0.37truein}
\centerline{\footnotesize
GORJAN ALAGIC\footnote{galagic@gmail.com}}
\vspace*{0.015truein}
\centerline{\footnotesize\it QMATH, Department of Mathematical Sciences, University of Copenhagen}
\baselineskip=10pt
\centerline{\footnotesize\it Copenhagen, Denmark}
\vspace*{10pt}
\centerline{\footnotesize 
CATHARINE LO\footnote{cwlo@math.princeton.edu}}
\vspace*{0.015truein}
\centerline{\footnotesize\it Department of Mathematics, Princeton University}
\baselineskip=10pt
\centerline{\footnotesize\it Princeton, New Jersey, USA}
\vspace*{0.225truein}
\publisher{September 25,2016}{January 2, 2017}

\vspace*{0.21truein}

\abstracts{The computational complexity class $\#\P$ captures the difficulty of counting the satisfying assignments to a boolean formula. In this work, we use basic tools from quantum computation to give a proof that the $SO(3)$ Witten-Reshetikhin-Turaev (WRT) invariant of 3-manifolds is $\#\P$-hard to calculate. We then apply this result to a question about the combinatorics of Heegaard splittings, motivated by analogous work on link diagrams by M. Freedman. We show that, if $\#\P \not\subseteq \FP^\NP$, then there exist infinitely many Heegaard splittings which cannot be made logarithmically thin by local WRT-preserving moves, except perhaps via a superpolynomial number of steps. We also outline two extensions of the above results. First, adapting a result of Kuperberg, we show that any presentation-independent approximation of WRT is also $\#\P$-hard. Second, we sketch out how all of our results can be translated to the setting of triangulations and Turaev-Viro invariants.}{}{}

\vspace*{10pt}

\keywords{}
\vspace*{3pt}
\communicate{R~Cleve~\&~R~de~Wolf}

\vspace*{1pt}\textlineskip
\section {Introduction}

\subsection{Background}

In computational complexity, a significant amount of useful theory rests on a foundation of widely-believed conjectures about the separation of complexity classes. The most famous example is the conjecture $\P \neq \NP$, which states that polynomial-time computability is a much stronger condition than polynomial-time verifiability. Besides the well-known P and NP, another class of interest is $\#$P, which is characterized by the problem of counting the number of satisfying assignments to a boolean formula. It is straightforward to see that $\P \subseteq \NP \subseteq \#\P$, which amounts to saying that verifying a solution to a satisfiability problem is easier than finding one, which in turn is easier than finding all of them\footnote{Strictly speaking, $\P$ and $\NP$ are decision classes, and as such are not directly comparable to $\#\P$, which is a counting class. So formally we are talking here about the inclusions $\P \subseteq \NP \subseteq \P^{\#\P}$}. Both containments are conjectured to be strict, and complexity theorists have amassed a significant amount of evidence for the truth of these conjectures. This is based on a large amount of theoretical work, as well as decades of practical experience; for instance, one consequence of $\P = \NP$ would be that practical cryptography does not exist.

In low-dimensional topology, complexity theory has an important role to play, e.g., in characterizing the computational difficulty of classification. While surfaces can be distinguished in polynomial time via the Euler characteristic, the classification problem appears to be harder for knots and 3-manifolds. The problem of deciding if a given knot diagram represents the unknot is known to be in both NP and coNP~\cite{Hass, coNP}, but no polynomial-time algorithm is known. Similar results hold for the three-sphere~\cite{HassKuperberg}. The difficulty of calculating certain invariants turns out to be greater still. For instance, it is $\#$P-hard to exactly calculate the Jones Polynomial~\cite{Vertigan1}. While approximate calculation of the Jones Polynomial is possible in quantum polynomial time~\cite{AJL, FKLW}, these approximations are likely too weak to be useful~\cite{Kuperberg}. For the case of quantum invariants of 3-manifolds, some analogous results are known. Using explicit formulas, Kirby and Melvin showed that the $SU_r(2)$ Witten-Reshetikhin-Turaev (or WRT) invariant is in $\P$ for $r = 4$, but $\#\P$-hard\footnote{Although they only claimed $\NP$-hardness, an inspection of their proof shows that, in fact, they showed $\#\P$-hardness. This was recently extended to the Turaev-Viro case, also at $r = 6$~\cite{Burton}.}\,\, for $r = 6$~\cite{KirbyMelvinArf}. All of the WRT invariants, as well as the closely-related Turaev-Viro (or TV) invariants,  can be approximated in quantum polynomial time; at infinitely many levels, they are also quantum-universal~\cite{BQP3, BQP2, FLW, BQP1}.

Given the above, it is natural to ask: do the strongly-believed conjectures of complexity theory have consequences in low-dimensional topology? An intriguing idea of M. Freedman combines the conjectured separation $\NP \neq \#\P$ with the $\#\P$-hardness of the Jones polynomial to prove the existence of link diagrams with curious properties~\cite{Freedman1}. These diagrams cannot be made ``logarithmically thin'' by local transformations, unless one is permitted superpolynomially-many moves. Crucially, one is allowed to apply both Reidemeister moves and a certain geometry-breaking move called the ``$r$-move''. Freedman also argues that, without the complexity conjecture, this fact appears to be resistant to known techniques in low-dimensional topology. If one also assumes certain separations between quantum complexity classes, even stronger results of this kind are achievable~\cite{Freedman2}.

\subsection{The present work}

Our results are primarily concerned with the $SO(3)$ WRT and TV invariants at $r$th roots of unity, where $r$ is prime and at least five. These are the cases for which the crucial density result of Larsen and Wang holds~\cite{LarsenWang}. Our first result shows that these invariants are $\#$P-hard to calculate exactly, or even to approximate with a certain additive error. We will state our results for the case of WRT, and put off extensions to TV to later sections. We assume that the input manifold is specified via a Heegaard splitting, i.e. a pair $(g, \alpha)$ where $g > 0$ is an integer and $\alpha$ is a word in the standard generators of the mapping class group MCG($\Sigma_g$) of the genus-$g$ surface $\Sigma_g$. We denote the class of 3-manifolds homeomorphic to $(g, \alpha)$ by $M_{g, \alpha}$. Given a positive integer $r$, we denote the WRT invariant  attached to $SO(3)$ at the $r$-th root of unity by $\WRT_r$.

\begin{thm}\label{thm:intro1}
Let $r \geq 5$ be prime. The following problem is $\#\P$-hard: given an integer $g$ and a word $\alpha$ in the canonical generators of $\MCG(\Sigma_g)$, output $\WRT_r(M_{g, \alpha}).$
\end{thm}

In fact, by adapting a theorem of Kuperberg~\cite{Kuperberg} about the Jones polynomial, we can prove an even stronger result, showing that any value-distinguishing approximation of the invariant is hard. 

\begin{thm}\label{thm:intro3}
Let $r \geq 5$ be prime, and $0 < a < b$ real. Given an integer $g$, a word $\alpha$, and a promise that $|\WRT_r(M_{g, \alpha})| < a$ or $|\WRT_r(M_{g, \alpha})| > b$, it is $\#\P$-hard to decide which is the case.
\end{thm}

Our second result is an analogue of Freedman's aforementioned work on complexity-theoretic conjectures and link diagrams~\cite{Freedman1}. To state the result, we need to set down a notion of distance between Heegaard splittings. This is defined in terms of basic moves: stabilizations, handle-slides, and a so-called ``$r$-move''. One $r$-move consists of choosing a Dehn twist $\sigma$ among the $3g-1$ canonical generators of MCG($\Sigma_g$), and inserting $\sigma^{4r}$ anywhere in the word. For any particular (positive, integral) $r$, this gives a notion of distance (called $r$-distance) between a pair of Heegaard splittings: we simply take the length of the shortest sequence of stabilizations, handle-slides, and $r$-moves which transforms one splitting into the other. We denote the $r$-distance by $\dist_r(\cdot, \cdot)$ and we say that two splittings are $r$-related (written $\sim_r$) if the $r$-distance between them is finite. It is then straightforward to show that
$$
(g, \alpha) \sim_r (h, \beta)
\qquad \text{implies} \qquad
\WRT_r(M_{g, \alpha}) = \WRT_r(M_{h, \beta})\,.
$$
An analogous fact was already known for the absolute value of WRT for surgery presentations~\cite{Gilmer, Lackenby, Wong}. Conditioned on the conjecture $\#\P \not\subseteq \FP^\NP$, we can prove the existence of a family of Heegaard splittings which cannot be made logarithmically thin (via $\sim_r$) except perhaps in superpolynomial time.

\begin{thm}\label{thm:intro2}
Assume $\#\P \not\subseteq \FP^{\NP}$, and choose prime $r \geq 5$. For any polynomial $p$, there exists an infinite family of Heegaard splittings $(g_j, \alpha_j)$ with the following property: for any family $(h_j, \alpha_j)$ satisfying $\dist_r((g_j, \alpha_j), (h_j, \beta_j)) \leq O(p(g_j))$, it is the case that $h_j \in \Omega(\log(g_j))$.
\end{thm}

The class $\FP$ is simply the function version of $\P$, i.e., one is allowed to output a bitstring rather than simply ``accept'' or ``reject''. We remark that the statement $\#\P \not\subseteq \FP^{\NP}$ is slightly weaker than $\#\P \neq \NP$; in fact, the statement $\#\P \neq \NP$ is technically a type violation since the former is a counting class while the latter is a decision class.

We remark that, while our exposition focuses on the $SO(3)$ version of the invariant, our results hold for any category where the relevant representation of the mapping class group is dense in the projective unitary group. In particular, our results also hold for the $SU(2)$ theory for $r \equiv 1 \bmod 4$ (see concluding remarks in ~\cite{LarsenWang}.)

\subsection{Organization}

The paper is organized as follows. In the next section, we recall some background facts regarding WRT invariants of 3-manifolds. We explain how to explicitly compute these invariants for Heegaard splittings, and how to define the $r$-move and the $r$-distance. We also state the density theorem of Larsen and Wang~\cite{LarsenWang}, which is crucial to our results. In \expref{Section}{sec:quantum}, we briefly discuss some basic facts from complexity theory and prove a fact about the $\#\P$-hardness of calculating matrix entries of quantum circuits\footnote{This fact is weaker than Aaronson's well-known theorem stating that PostBQP = PP~\cite{AaronsonPostBQP}, but it is easier to state and sufficient for most of what we do.}. \expref{Section}{sec:results} contains the proofs of \expref{Theorem}{thm:intro1} and \expref{Theorem}{thm:intro2}. In \expref{Section}{sec:tv}, we outline two extensions. First, we show how to adapt a theorem of Kuperberg (Theorem 1.2 in~\cite{Kuperberg}) in order to achieve the stronger $\#\P$-hardness result, i.e., \expref{Theorem}{thm:intro3}. Second, we provide a sketch of how all of our results can be extended to the setting of Turaev-Viro invariants and triangulated 3-manifolds. The exposition in Sections 1 to 4 is largely self-contained, and should be relatively accessible. On the other hand, \expref{Section}{sec:tv} requires a number of technical results in complexity theory (for which we refer to ~\cite{Kuperberg}) as well as some background on Turaev-Viro invariants, tensor networks, and triangulations (for which we refer to ~\cite{BQP3, TV}.)

\section{3-manifolds and quantum invariants}

\subsection{3-manifolds, Heegaard splittings, and homeomorphism-preserving moves}\label{sec:manifolds}

In this work, a ``3-manifold'' will mean a compact connected Hausdorff space, each point of which has a neighborhood homeomorphic to $\RR^3$. It is well-known that any 3-manifold $M$ can be described by a positive integer $g$ and a self-homeomorphism $\alpha$ of $\Sigma_g$, the compact orientable surface of genus $g$. Specifically, $M$ is homeomorphic to the quotient space $H_g \sqcup_\alpha H_g$ formed by gluing two genus-$g$ handlebodies along $\alpha: \partial H_g \rightarrow \partial H_g$. The group $\MCG(\Sigma_g)$ of (isotopy classes of) self-homeomorphisms of $\Sigma_g$ is generated by Dehn twists $\{\sigma_1, \sigma_2, \dots, \sigma_{3g-1}\}$ about the canonical curves shown in \expref{Figure}{canonical-curves}. This leads to a purely combinatorial description of any 3-manifold, as a pair $(g, \alpha)$ where $g$ is a positive integer and $\alpha$ is a word in the $3g-1$ generators; moreover, any such pair describes a valid 3-manifold. We will refer to such a pair as a Heegaard splitting, and write $M_{g, \alpha}$ for the corresponding homeomorphism class of 3-manifolds. We will frequently discuss the complexity of computational tasks whose input is a Heegaard splitting $(g, \alpha)$. It is then implicit that the number of computation steps is measured as a function of the input size, which we take to be $|\alpha|$, i.e., the length of the word $\alpha$.

\begin{figure}[h]
\begin{center}
\includegraphics{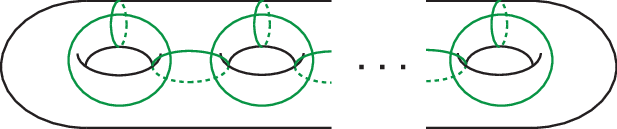}
\fcaption{\label{canonical-curves} The set of Dehn twists about the $3g-1$ canonical curves generates MCG$(\Sigma_g)$.}
\end{center}
\end{figure}

If two Heegaard splittings represent homeomorphic 3-manifolds, then there exists a finite sequence of homeomorphism-preserving ``moves'' which transforms one splitting into the other~\cite{Singer}. There are two types of moves. The first is called the ``handle-slide'', and it maps 
$$
(g, \alpha) \mapsto (g, \beta \alpha \beta')
$$
where $\beta$ and $\beta'$ are words which describe two elements of the so-called handlebody subgroup. The handlebody subgroup consists of self-homeomorphisms of $\Sigma_g$ which extend to self-homeomorphisms of the handlebody $H_g$. The second move type is ``stabilization'', and is defined by
\begin{equation}\label{eq:stabilize}
(g, \alpha) \mapsto (g+1, \alpha \sigma_{3g+1} \sigma_{3g+2} \sigma_{3g+1}^{-1})\,.
\end{equation}
In the right-hand side above, we have implicitly interpreted $\alpha$ as a word in the generators of $\MCG(\Sigma_{g+1})$ in the obvious way. Stabilization amounts to taking the connected sum of $M_{g, \alpha}$ with the three-sphere, described by the genus-one Heegaard splitting $(1, \sigma_1\sigma_2\sigma_1^{-1})$. The inverse move (destabilization) can only be applied if we can ``undo'' a connected sum with the three-sphere, i.e.,
\begin{equation}\label{eq:destabilize}
(g, \gamma \sigma_{3g-2} \sigma_{3g-1} \sigma_{3g-2}^{-1}) \mapsto (g-1, \gamma)\,,
\end{equation}
where the word $\gamma$ may not contain the generators $\sigma_{3g-1}$ and $\sigma_{3g-2}$.\footnote{It is not necessary to consider other genus-one splittings of the sphere: they are all related only by handle-slides~\cite[Theorem 3.7]{Schar}. It is also not necessary to consider words $\alpha'$ which satisfy the destabilization conditions and are equivalent to $\alpha := \gamma \sigma_{3g} \sigma_{3g-1} \sigma_{3g}^{-1}$; since $1$ is obviously in the handlebody subgroup, the transformation $\alpha \mapsto \alpha'$ is a valid handle-slide. It thus suffices to consider the particular choice of stabilization and destabilization move we chose.} 

The above ideas allow us to define a notion of distance between Heegaard splittings. First, we define a length $|\cdot|$ for moves. The length of a handle-slide $(g, \alpha) \mapsto (g, \beta\alpha\beta')$ is the sum $|\beta| + |\beta'|$ of the lengths of the relevant words. A stabilization move is described by a single bit, which determines if the genus should increase or decrease. In either case, the length of a stabilization move is one. The length $|s|$ of a sequence $s$ of handle-slides and stabilization moves is simply the sum of the lengths of all the moves in the sequence. Now we are ready to define a distance between a pair $(g, \alpha)$ and $(g', \alpha')$ of Heegaard splittings:
$$
\dist((g, \alpha), (g', \alpha')) := 
\min \{ |s| : s \text{ is a sequence of moves with } s(g, \alpha) = (g', \alpha') \}\,.
$$
If no sequence of moves suffices, then the distance is defined to be infinite. If the distance is finite, we will write $(g, \alpha) \sim (g', \alpha')$ and say that the two Heegaard splittings are equivalent. This precisely captures the notion of 3-manifold homeomorphism.
\begin{thm}\label{thm:heegaard}
~\cite{Rei, Singer} Let $(g, \alpha)$ and $(g', \alpha')$ be two Heegaard splittings. Then $\dist((g, \alpha), (g', \alpha'))$ is finite if and only if the corresponding 3-manifolds $M_{g, \alpha}$ and $M_{g', \alpha'}$ are homeomorphic.
\end{thm}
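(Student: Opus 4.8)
The plan is to treat the two implications separately. The implication ``$\dist$ finite $\Rightarrow$ homeomorphic'' is elementary: I would induct on the length of a sequence $s$ of moves with $s(g,\alpha) = (g',\alpha')$, reducing to the claim that a single move preserves the homeomorphism type of the associated 3-manifold. For a handle-slide $(g,x)\to(g,yxz)$ with $y,z\in\MCG^+(g)$, the maps $y$ and $z$ extend to self-homeomorphisms of the two glued handlebodies, so precomposing and postcomposing the gluing map by them is absorbed by a self-homeomorphism of $M_{g,x}$; hence $M_{g,x}\cong M_{g,yxz}$. For a stabilization, the move realizes the connected sum of $M_{g,x}$ with the genus-one Heegaard splitting $(1,aba^{-1})$ of $S^3$, and $M\#S^3\cong M$; destabilization is the inverse move and so likewise preserves homeomorphism type. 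This gives one direction.

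For the converse, suppose $M_{g,\alpha}\cong M_{g',\alpha'}$. The essential input is the classical Reidemeister--Singer theorem~\cite{Singer,Rei}: any two Heegaard splittings of a fixed compact 3-manifold have a common stabilization, i.e.\ after finitely many geometric stabilizations on each side the two Heegaard surfaces become ambient isotopic in $M$. I would then convert this into our combinatorial setting in three steps. First, a geometric stabilization of a Heegaard surface is exactly our combinatorial stabilization move, so the theorem yields words $\beta,\beta'$ and a common genus $g''$ such that $(g,\alpha)$ reaches $(g'',\beta)$ and $(g',\alpha')$ reaches $(g'',\beta')$ by stabilization moves. Second, for a fixed genus, two Heegaard splittings whose Heegaard surfaces are ambient isotopic in $M$ differ precisely by composing the gluing map on each side with an element of the handlebody subgroup $\MCG^+(g'')$ --- an ambient isotopy starting at the identity cannot interchange the two sides of the surface, so no orientation/side-swapping ambiguity arises --- and hence $\beta$ and $\beta'$ are related by a single handle-slide. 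Concatenating the stabilizations, the handle-slide, and the reversed stabilizations gives a finite move sequence from $(g,\alpha)$ to $(g',\alpha')$, so $\dist((g,\alpha),(g',\alpha'))<\infty$.

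The main obstacle is the Reidemeister--Singer theorem itself, which is a substantial classical result that I would invoke rather than reprove; its standard proofs proceed either by Cerf theory, taking a generic path between self-indexing Morse functions realizing the two splittings and checking that each birth--death or critical-value crossing along the path corresponds to a stabilization or an isotopy of the Heegaard surface, or by a Rubinstein--Scharlemann sweep-out argument. The remaining work is bookkeeping to align the geometric picture with the precise definitions of Section~\ref{sec:manifolds}: one must confirm that the specific combinatorial stabilization move defined there --- appending the $S^3$-word associated to $(1,aba^{-1})$ after the trivial embedding $\tau$ --- agrees with the geometric operation of adding a trivial handle (this is where the cited fact~\cite{Schar} that genus-one splittings of $S^3$ are related only by isotopies and handle-slides is used), and that ``isotopic Heegaard surfaces of the same genus'' really does collapse exactly to the handlebody-subgroup relation and nothing coarser. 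Neither of these presents any real difficulty once the Reidemeister--Singer theorem is granted.
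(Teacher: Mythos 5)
Your proposal is correct and matches the paper's treatment: the paper offers no proof of this statement beyond citing Reidemeister and Singer, checking (as you do) that stabilization and handle-slides preserve homeomorphism type, and noting in a footnote (via Scharlemann) that the one fixed choice of stabilization word suffices. Your argument simply makes explicit the same reduction of the nontrivial direction to the classical Reidemeister--Singer common-stabilization theorem plus the double-coset/handle-slide bookkeeping, so there is nothing essentially different to compare.
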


\subsection{The WRT invariant of 3-manifolds, and $r$-distance}\label{sec:wrt}

We now sketch the definition of the $SO(3)$ WRT invariants of 3-manifolds. A thorough description is given in~\cite{WRT}. Our definition will be based on the Heegaard splitting, rather than Dehn surgery. This version of the invariant is sometimes referred to as the Crane-Kohno-Kontsevich presentation; it was proved equivalent to the surgery formulation by Piunikhin~\cite{Piunikhin93}. 

The WRT invariants form an infinite family, parameterized by a prime $r \geq 5$. Fixing such an $r$, we set $A = ie^{2\pi i/ 4r}$ and define the quantum integers
$$
[k] = \frac{A^{2k} - A^{-2k}}{A^2 - A^{-2}}\,.
$$
Each choice of $r$ also comes with the following data:
\begin{enumerate}
\item a finite set $L$ of integer ``labels'' $\{0, 2, 4, \dots, r-3\}$;
\item a list $d : L \rightarrow \CC$ of ``dimensions'' for each label, given by $d_i = [i+1]$; 
\item a finite set $O \subset L^3$ of ``fusion rules'';
\item an ``$R$-tensor'' $R_i^{jk} : L^3 \rightarrow \CC$;
\item an ``$F$-tensor'' $F_{ijk}^{lmn} : L^6 \rightarrow \CC$.
\end{enumerate}

\noindent There are various constraints on the above data. For instance, $(i, j, k) \in O$ implies that there is a planar (but possibly degenerate) triangle with sidelengths $i$, $j$ and $k$; the $R$-tensor is zero unless its indices are set to a triple from $O$; the $F$-tensor is zero unless the sextuple of indices is the set of sidelengths of a rigid tetrahedron. If we fix two lower and two upper indices (say $i,j,l,m$,) the $F$-tensor becomes a matrix (i.e., a linear map) $F_k^n$ on $\CC^L$; this map is always unitary (see also \eqref{eq:F-move} below.) For the complete set of constraints, see~\cite{WRT, FreedmanKrushkal, Piunikhin93}. In general, the five pieces of data above completely determine two additional quantities which we will require. These are the ``total dimension'' $\D \in \CC$ and the ``$S$-tensor'' $S : L^3 \rightarrow \CC$ (see~\cite{BQP2} and references therein,) defined by

\begin{equation}\label{eq:D and S-move}
\D = \Bigl(\sum_{j \in L} d_j^2\Bigr)^{1/2}
\qquad \qquad \qquad  
\D S^i_{jk} = \sum_{l\,:\,(j,k,l) \in O} \frac{d_l}{\sqrt{d_i}} F_{ljj}^{ikk} R_l^{kj}R_{l}^{jk}\,.
\end{equation}

Before defining the invariant attached to $r$, we must first define a (genus-indexed) family of representations $\rho_{r, g}$ of $\MCG(\Sigma_g)$. The invariant will later be defined as a particular matrix entry of these representations. Decompose $\Sigma_g$ into three-punctured spheres or ``pants'' as shown in  \expref{Figure}{spines}; we will refer to this as the standard pants decomposition of $\Sigma_g$. Dual to such a pants decomposition is a trivalent graph $\Gamma_g$ called the spine. The spine has one vertex for every pants in the decomposition, and one edge for each meeting between two pants. We will assign labels from $L$ to the edges of $\Gamma_g$. Such a labeling is called fusion-consistent if, for every vertex $v$ of $\Gamma$, the triple of labels on the edges incident to $v$ is in $O$. The vector space of the representation $\rho_{r, g}$ is then the formal linear span of fusion-consistent labelings:
\begin{figure}
\begin{center}
\includegraphics{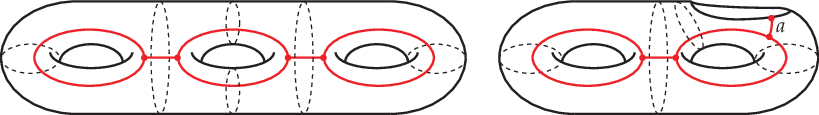}
\fcaption{\label{spines} Pants decompositions (and corresponding spines) of $\Sigma_3$ and $\Sigma_2^{(a)}$.}
\end{center}
\end{figure}
$$
\H_{r,g} := \textbf{span}_\CC \left \{ \ket \ell :  \ell \text{ is a fusion-consistent labeling of } \Gamma_g \right \}. 
$$
This construction naturally extends to surfaces with some finite number of boundary components labeled by elements of $L$. Consider the surface $\Sigma_g^v$ having $g$ handles and $m$ boundary components labeled by the $m$-tuple $v \in L^m$. A pants decomposition of $\Sigma_g^v$ yields a spine $\Gamma_g^v$, which has $m$ edges whose labels are permanently fixed by $v$, as in the example in \expref{Figure}{spines}. We then consider labelings of the remaining edges, so that the total labeling is fusion-consistent. The space spanned by these labelings is denoted $\H_{r,g}^v$.
 
Returning to the case of no boundary, we now define the action of the canonical generators $\sigma_j$ of $\MCG(\Sigma_g)$ on $\H_{r, g}$. First, suppose the canonical curve $\gamma_j$ is isotopic to a cut in the standard pants decomposition. Let $e_j$ denote the (unique) spine edge intersecting $\gamma_j$. Then the corresponding Dehn twist $\sigma_j$ acts diagonally:
\begin{equation}\label{eq:dehn-twist-action}
\rho_{r, g} (\sigma_j) : \ket \ell \mapsto e^{R_0^{\ell(e_j)\ell(e_j)}} \ket \ell\,.
\end{equation}
The action of the other canonical twists is defined similarly, but requires a basis change. Let $n$ denote the number of edges of $\Gamma$. The $F$-tensor and the $R$-tensor can be viewed as linear operators on $\H_{r, g}$, as shown in the expressions below. These operators act in a geometrically local way, in the sense that they are the identity on all but a constant $(\leq 6)$ number of edges. Each expression below shows how to decompose a standard basis element of $\H_{r, g}$ as a linear combination of elements of another orthonormal basis for the same Hilbert space. This other basis corresponds to a different pants decomposition of $\Sigma_g$. The unitary transformations described in \eqref{eq:F-move} are precisely the equivalences between defining $\H_{r, g}$ over these different choices of spine (i.e., pants decomposition).

\begin{equation} \label{eq:F-move}
\raisebox{-1.3cm}{\includegraphics[scale=.5]{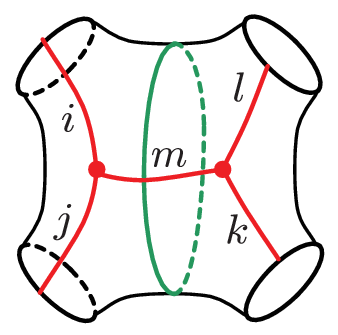}}
=
\sum_n F^{ijm}_{kln} 
\raisebox{-1.3cm}{\includegraphics[scale=.5]{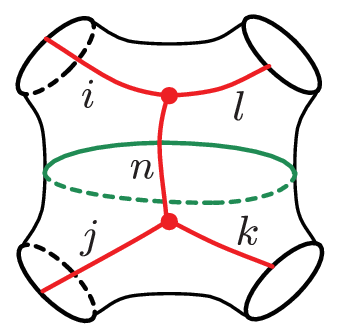}},
\qquad
\raisebox{-1.3cm}{\includegraphics[scale=.5]{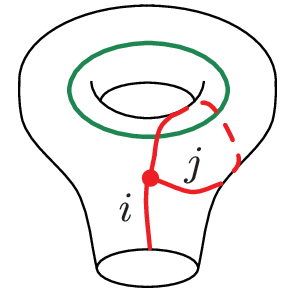}}
=
\sum_k S^i_{jk} 
\raisebox{-1.3cm}{\includegraphics[scale=.5]{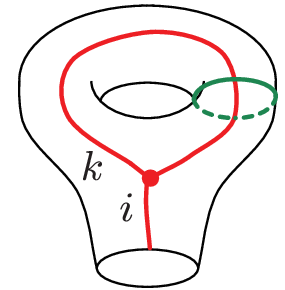}}
\end{equation}

\vspace{0.5cm}
Now let $\gamma_j$ be a canonical curve which is not a cut in the standard pants decomposition. By considering the transformations \eqref{eq:F-move} as well as \expref{Figure}{canonical-curves} and \expref{Figure}{spines}, we see that at most one application each of the $F$-tensor and the $S$-tensor suffices to change the standard pants decomposition into a pants decomposition where $\gamma_j$ is a cut. In that basis (now consisting of labelings of a different spine), $\sigma_j$ acts precisely as in \eqref{eq:dehn-twist-action}. This completely describes the representations $\rho_{r, g}$. Before defining the invariant itself, we record a theorem which is crucial to our results.

\begin{thm}\label{thm:density} \cite{LarsenWang}
For prime $r \geq 5$ and $g > 1$, the image of $\rho_{r, g}$ is dense in $\opn{PSU}(\mathcal H_{r, g})$.
\end{thm}

Finally, we define the WRT invariant as a particular, scaled matrix entry of the representation $\rho_{r, g}$. Let $\ket 0 \in \H_{r, g}$ denote the basis vector corresponding to the labeling where each edge of the spine carries the trivial label $0$.

\begin{de}\label{def:WRT}\cite{Piunikhin93}
Let $(g, \alpha)$ be a Heegaard splitting. The $\text{SO}_r(3)$ Witten-Reshetikhin-Turaev invariant of $(g, \alpha)$ is defined to be
$$
\emph{\WRT}_r(g, \alpha) = \D ^{g-1}\bra 0 \rho_{r, g}(\alpha) \ket 0 \,.
$$
\end{de}
Note that the quantum dimension $\D$ above is implicitly parameterized by $r$. The central fact now is that $\WRT_r(g, \alpha)$ depends only on the homeomorphism type $M_{g, \alpha}$. The proof proceeds by establishing the invariance of $\WRT_r$ under the stabilization and handle-slide moves described above (see \cite{Crane}, \cite{Kohno} and \cite{Kontsevich} for complete proofs). 

In fact, there is a simple additional move which breaks homeomorphism but preserves $\WRT_r$ for any fixed $r$. For any Heegaard splitting $(g, \alpha)$ and any choice of canonical twist $\sigma_j$, the ``$r$-move'' acts by inserting $4r$ twists in any position in the word:
$$
(g, \alpha) \mapsto (g, \alpha_0 \sigma_j^{4r} \alpha_1)
\qquad 
\text{for any words }\alpha_0, \alpha_1 \text{ satisfying } \alpha_0 \alpha_1 = \alpha\,.
$$
Invariance of $\WRT_r$ under the $r$-move follows from the eigenvalues of the Dehn twists~\cite{FreedmanKrushkal}; these are given by $\exp(R_0^{jj}) = \exp(-\pi i j (j+2) / 2r)$, where $j$ is the label of the relevant spine edge. It follows that 
\begin{equation}\label{r-move-invariant}
\rho_{r, g} (\alpha_0 \sigma_j^{4r} \alpha_1) 
= \rho_{r, g} (\alpha_0) \circ \rho_{r, g} (\sigma_j^{4r}) \circ \rho_{r, g} (\alpha_1) 
= \rho_{r, g} (\alpha_0) \circ \one_{\mathcal H_{r, g}} \circ \rho_{r, g} (\alpha_1) 
= \rho_{r, g} (\alpha_0\alpha_1)
= \rho_{r, g} (\alpha)\,.
\end{equation}
In particular, $\WRT_r$ is invariant under the $r$-move. We remark that we can define an inverse $r$-move in the obvious way, i.e., as deletion of any subwords of the form $\sigma_j^{4r}$ for any $j$. 

The $r$-move together with handle-slides and (de)stabilizations defines a second notion of distance on Heegaard splittings. Each $r$-move can be described by the index $j$ of the Dehn twist, and another index indicating where in the word to insert the $4r$ twists. As before, the length of a sequence of moves is defined to be the sum of the lengths of the descriptions of each move. We thus define $r$-distance by setting
$$
\dist_r((g, \alpha), (g', \alpha')) := 
\min \{ |s| : s \text{ is a sequence of moves (incl. }r\text{-moves) with } s(g, \alpha) = (g', \alpha') \}\,.
$$
If no sequence of moves suffices, then the $r$-distance is defined to be infinite. If the $r$-distance is finite, we will write $(g, \alpha) \sim_r (g', \alpha')$ and say that the two Heegaard splittings are $r$-equivalent. We then have the following.
\begin{thm}\label{thm:r-distance}
Let $(g, \alpha)$ and $(g', \alpha')$ be two Heegaard splittings. If $\dist_r((g, \alpha), (g', \alpha'))$ is finite, then $\WRT_r(g, \alpha) = \WRT_r(g', \alpha')$.
\end{thm}

\subsection{A simple example: the Fibonacci representation}

For concreteness, we briefly describe the case $r = 5$, which is the simplest case for which all our results hold. This is the so-called Fibonacci category, and is defined with the following data (see Section 6 of ~\cite{FLW}.) We first compute $A = e^{3 \pi i / 5}$. The data is then
\begin{enumerate}
\item label set $L = \{0, 1\}$ (we relabel $2$ as $1$ for simplicity);
\item dimensions $d_0 = [1] = 1$ and $d_1 = [3] = (1 + \sqrt{5})/2$; 
\item fusion rules $O = \{(a, b, c) \in L^3 : a + b + c \neq 1\}$;
\item $R$-tensor defined by $R_0^{00} = 0$ and $R_0^{11} = 3 \pi i / 5$;
\item $F$-tensor defined by 
\end{enumerate}
\begin{center}
\includegraphics[width = 0.6\textwidth]{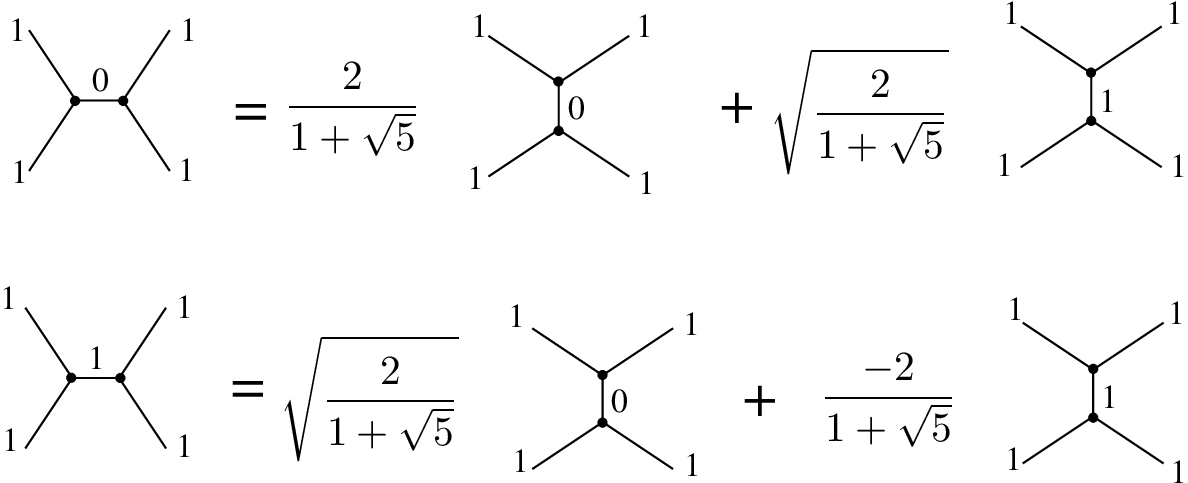}
\end{center}

It's not hard to check that the above values determine the $R$-tensor and $F$-tensor fully: the remaining values are forced by the fusion rules (and are either zero or one). Using equation \eqref{eq:D and S-move}, we see that the total dimension satisfies $\D^2 = (5 + \sqrt{5})/2$ and that the $S$-tensor is defined by
\begin{align*}
\D S_{00}^0 &=1 
\qquad & \qquad
\D S_{10}^0 &= \D S_{01}^0 = d_1\\
\D S_{11}^0 &=1+d_1 e^{i4\pi/5}
\qquad & \qquad
\D S_{11}^1 &=\sqrt{d_1}(1-e^{i4\pi/5})\,.
\end{align*}
As expected from \eqref{eq:dehn-twist-action}, the eigenvalues of the Dehn twists are either $0$ or $e^{3 \pi i / 5}$. It immediately follows that the Fibonacci invariant does not change under insertion of $\sigma_j^{10}$ anywhere in the Heegaard splitting word, for any $j$. Fibonacci was first shown to be dense in~\cite[Theorem 6.2]{FLW}.

\section{Computational complexity and quantum circuits}\label{sec:quantum}

\subsection{P, NP, and $\#$P, briefly}
Computational complexity attempts to classify problems according to the number of basic computation steps required to solve them\footnote{The notion of computation time can be fully formalized using Turing Machines. To gain an intuitive understanding, it is sufficient to think about writing a computer program for the task, and considering the total number of basic instructions (e.g., additions or multiplications) the program must execute on a given input.}, expressed as a function of the input size. We will assume that the computations are all deterministic, and that both the input and the output is a bitstring. An important set of examples concerns the satisfiability problem for boolean formulas. Recall that a boolean formula is an expression involving a finite number of input variables (literals), NOTs (negations), ANDs (conjunctions), and ORs (disjunctions). In this work, we will assume that all formulas are in 3CNF, that is, a conjunction of clauses where each clause is a disjunction of three (possibly negated) literals. Any 3CNF formula can be encoded as a bitstring at linear cost. Note that the number of distinct clauses is at most cubic in the number of variables.

The first relevant problem in satisfiability is assignment-checking: given a 3CNF formula $\varphi$ and a setting of its variables to values $x_j \in \{0, 1\}$, does $\varphi(x)$ evaluate to $0$ or $1$? Clearly, this can be answered in a number of computation steps which is polynomial in the length of the description of $\varphi$. We thus say that this problem is in the class $\P$, consisting of all decision problems which can be solved in polynomial time.

The next relevant problem is satisfiability (3SAT): given a 3CNF formula $\varphi$, does there exist a setting $x$ of its inputs such that $\varphi(x) = 1$? Of course, if someone provides you with an $x$ such that $\varphi(x) = 1$ (a ``witness''), then the problem of verifying this is in $\P$. This means that 3SAT falls into the class NP of problems whose positive solutions are verifiable in polynomial time. In fact, by the famous Cook-Levin theorem~\cite{Cook,Levin}, 3SAT is also ``hard'' for the class\footnote{In fact, 3SAT is ``NP-complete,'' meaning that it is both NP-hard and also in NP.}\,\,~NP, i.e., any other problem $L$ in NP can be polynomial-time reduced to solving 3SAT. In other words, an algorithm for 3SAT can also be used to solve any other problem in NP, with at most a polynomial number of additional steps. But how do we directly attack 3SAT itself? The obvious approach is to simply try all possible assignments, of which there are exponentially many. If one can do significantly better is one of the biggest open questions in science: is P = NP? Of course, the conjectured state of affairs is P $\neq$ NP.

The third relevant problem counts satisfying assignments ($\#3$SAT): given a 3CNF formula $\varphi$, how many assignments $x$ satisfy $\varphi(x) = 1$? This problem is contained in (and is hard for) the class $\#$P. This class demands that the output to the algorithm is a number $m$ such that there exist exactly $m$ distinct proofs, each of which is polynomial-time checkable; in the case of $\#3$SAT, these proofs are the satisfying assignments themselves, and the polynomial-time checker is the first algorithm discussed above. Clearly, $\#3$SAT is at least as hard as 3SAT, so NP $\subseteq \#$P. It is a widely-believed conjecture that the containment is strict, i.e. that NP $\neq \#$P. Formally, this last statement is invalid, since NP contains only decision problems, while $\#$P contains counting problems. We will instead consider the conjecture $\#\P \not\subseteq \FP^\NP$. Here $\FP$ denotes the class of \emph{functions} which can be computed in polynomial time, rather than simply decision problems. $\FP^\NP$ is then the class of functions computable by a polynomial-time algorithm with access to an $\NP$ oracle\footnote{Informally, this oracle can be viewed as a ``black-box'' subroutine which the $\FP$-algorithm can invoke; the subroutine solves, in a single timestep, any problem lying in $\NP$ (e.g., $3$SAT).}. Strictly speaking, this is a weaker statement, since $\FP^\NP$ contains the class coNP, which is thought to be distinct from $\NP$.

\subsection{Quantum Circuits}

Recall that a boolean circuit consists of wires and gates. Each wire carries a bit, and each gate performs a local boolean operation. The circuit implements the boolean function defined by the composition of these operations. Boolean circuits are a useful abstraction for thinking about classical, digital computers. An analogous abstraction can also be defined for quantum-mechanical computation; this is a quantum circuit. Quantum circuits also consist of wires and gates, but each wire now carries a qubit state, i.e., a unit vector in the space $\CC_2$ (equipped with a preferred basis $\{\ket 0, \ket 1\}$ corresponding to the values of a classical bit.) Each gate is a unitary operator acting on a constant number of qubits, and leaving the remaining qubits fixed. The circuit implements the unitary operator defined by the composition of these local gates. Unlike with boolean circuits, the total number of qubits does not change as gates are applied.

While one can consider arbitrary local unitary gates, we will only make use of the following three:
$$
X = 
\left(\begin{matrix}
0 & 1 \\
1 & 0 
\end{matrix}\right)
\qquad \text{and} \qquad
H =  
\frac{1}{\sqrt{2}}
\left(\begin{matrix}
1 & 1 \\
1 & -1 
\end{matrix}\right)
\qquad \text{and} \qquad
T: \ket{x, y, z} \to \ket{x , y , z \oplus xy}\,.
$$
The last gate is the controlled-controlled-NOT gate (sometimes called the Toffoli gate), and is universal for classical computation. The set $\{H, T\}$ is a universal gate set for quantum computation~\cite{Aharonov1, Shi}. Although the gate $X$ might seem extraneous, note that applying it using only the set $\{H, T\}$ requires an additional resource: qubits which are initialized in the $\ket{1}$ state.

Starting with the formalism of quantum circuits, one may define a model of quantum computation~\cite{NielsenChuang}.  It is natural to ask if this model depends on the choice of gate set in some essential way. The Solovay-Kitaev theorem is a basic result that answers this question in the negative: all universal gate sets define the same model of efficient quantum computation. More precisely, any universal gate set can simulate any other, with arbitrary operator-norm precision and polylogarithmic overhead.~\footnote{We remark that $\{H, T\}$ is not universal in the sense of Solovay-Kitaev, as it does not densely generate $SU(d)$. Nonetheless, it gives rise to the same model of computation. This detail is not relevant to us, since we will only use Solovay-Kitaev in order to simulate the set $\{H, T\}$ with some other, properly dense set.}

\begin{thm} [Solovay-Kitaev~\cite{NielsenChuang}]\label{thm:SolovayKitaev}
Let $S$ be a finite set of unitary operators which is closed under inverses and spans a dense subset of $SU(d)$.  Then for any $U \in SU(d)$ and any $\varepsilon>0$, there exists a composition $U' = G_1 \circ G_2 \cdots \circ G_m$ of operators from $S$ such that $\|U' - U\|<\varepsilon$ and $m \in O(\log^4 (1/\varepsilon))$.
\end{thm}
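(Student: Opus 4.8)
The plan is to prove this by the standard recursive net-refinement argument (see~\cite{NielsenChuang,SolovayKitaev}), whose engine is the ``quadratic shrinking'' property of group commutators. First I would dispose of the base case. Since $S$ generates a dense subgroup of $SU(d)$ and $SU(d)$ is compact, for any fixed $\varepsilon_0>0$ there is a finite length $\ell_0$ such that the set $\mathcal{S}_0$ of all products of at most $\ell_0$ elements of $S$ forms an $\varepsilon_0$-net of $SU(d)$: density supplies an approximating word for each point of $SU(d)$, compactness extracts a finite subcover of the associated open balls, and $\ell_0$ is taken to be the largest word length occurring in that subcover. Because $S$ is closed under inverses, each $\mathcal{S}_n$ defined below is automatically closed under inverses as well. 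The constant $\varepsilon_0$ will be fixed to be a small absolute constant at the very end.

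The recursion rests on two lemmas. (i) \textbf{Commutator shrinking}: if $\|V-I\|,\|W-I\|\le\delta\le 1$, then $[V,W]:=VWV^{-1}W^{-1}$ satisfies $\|[V,W]-I\|=O(\delta^2)$; this is a direct expansion using $V=I+(V-I)$, etc. (ii) \textbf{Exact commutator factorization}: there is an absolute constant $c$ so that every $\Delta\in SU(d)$ with $\|\Delta-I\|\le c^2$ can be written \emph{exactly} as $\Delta=VWV^{-1}W^{-1}$ with $\|V-I\|,\|W-I\|\le c\sqrt{\|\Delta-I\|}$. Here Lie theory enters: write $\Delta=\exp(iH)$ with $H\in\mathfrak{su}(d)$ and $\|H\|=O(\|\Delta-I\|)$, and solve $\exp(iA)\exp(iB)\exp(-iA)\exp(-iB)=\exp(iH)$ for $A,B\in\mathfrak{su}(d)$ with $\|A\|,\|B\|=O(\sqrt{\|H\|})$. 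By Baker--Campbell--Hausdorff the leading order requires only that $[A,B]=iH+(\text{higher order in }A,B)$, which is solvable since $\mathfrak{su}(d)$ is simple (so the commutator map is onto, with suitable norm control), and the higher-order tail is absorbed by a contraction-mapping argument on a small ball around $0$.

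Now the recursion. Let $\mathcal{S}_n$ be the set of products of at most $5^n\ell_0$ elements of $S$; I claim $\mathcal{S}_n$ is an $\varepsilon_n$-net with $\varepsilon_n=C\varepsilon_{n-1}^{3/2}$ for an absolute constant $C$. Indeed, given $U\in SU(d)$, pick $U_{n-1}\in\mathcal{S}_{n-1}$ with $\|U-U_{n-1}\|\le\varepsilon_{n-1}$ and set $\Delta=UU_{n-1}^{-1}$, so $\|\Delta-I\|\le\varepsilon_{n-1}$; factor $\Delta=VWV^{-1}W^{-1}$ with $\|V-I\|,\|W-I\|=O(\sqrt{\varepsilon_{n-1}})$ via Lemma (ii); approximate $V,W$ by $\tilde V,\tilde W\in\mathcal{S}_{n-1}$ to within $\varepsilon_{n-1}$; then $\tilde V\tilde W\tilde V^{-1}\tilde W^{-1}U_{n-1}$ lies in $\mathcal{S}_n$ (word length at most $5\cdot 5^{n-1}\ell_0$) and, using Lemma (i) to bound how far the perturbed commutator strays from $\Delta$, approximates $U$ to within $O(\sqrt{\varepsilon_{n-1}}\cdot\varepsilon_{n-1})=O(\varepsilon_{n-1}^{3/2})$. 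Unrolling $\varepsilon_n=C\varepsilon_{n-1}^{3/2}$: choosing $\varepsilon_0$ with $C^2\varepsilon_0<1$ gives $C^2\varepsilon_n=(C^2\varepsilon_{n-1})^{3/2}$, hence $C^2\varepsilon_n=(C^2\varepsilon_0)^{(3/2)^n}\to 0$ doubly exponentially in $n$. So $n=O(\log\log(1/\varepsilon))$ suffices to reach $\varepsilon_n<\varepsilon$, and the resulting word length is $m=5^n\ell_0=O((\log(1/\varepsilon))^{\log_{3/2}5})$, a fixed power of $\log(1/\varepsilon)$. A more careful instantiation of the refinement step (higher-order commutators, or a sharper net at each level) pushes this exponent down toward the $\log^2(1/\varepsilon)$ bound quoted above; alternatively the discrepancy is simply absorbed into the asymptotic notation.

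The main obstacle is Lemma (ii), the exact commutator factorization with the square-root norm bound: one must make the Lie-algebraic estimate completely uniform — bounding all Baker--Campbell--Hausdorff tails, confirming that the constant $c$ depends only on $d$, and checking that the $V,W$ it produces land in the region where the lower-level net $\mathcal{S}_{n-1}$ is provably $\varepsilon_{n-1}$-fine, so that the recursion closes. The base case, commutator shrinking, and the unrolling of the recurrence are all routine by comparison.
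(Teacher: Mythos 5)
The paper does not prove this theorem; it states it as a black box citing Nielsen and Chuang, so there is no in-paper proof to compare against. Your proposal is a faithful sketch of the standard Solovay--Kitaev argument found in that reference: base $\varepsilon_0$-net by density plus compactness, quadratic commutator shrinking, balanced square-root commutator factorization near the identity, and the recursion $\varepsilon_n = C\varepsilon_{n-1}^{3/2}$ with word length multiplying by $5$ at each level. The structure is sound, and you correctly identify the exact factorization lemma as the place where the real Lie-theoretic work lives.

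The one genuine defect is your handling of the final exponent. Unrolling your own recursion gives $m = 5^n\ell_0$ with $n \approx \log_{3/2}\log(1/\varepsilon)$, hence $m = O\bigl((\log(1/\varepsilon))^{\log_{3/2}5}\bigr) \approx O\bigl(\log^{3.97}(1/\varepsilon)\bigr)$. This is \emph{not} $O(\log^2(1/\varepsilon))$, so your fallback that ``the discrepancy is simply absorbed into the asymptotic notation'' is false; $\log^4$ is not $O(\log^2)$. Your first fallback (higher-order commutators, sharper nets) is a real direction, but it does not get you to exponent $2$ by a routine tightening of the same argument -- the exponent $2$ in the paper's statement is sharper than what the cited Nielsen--Chuang proof actually delivers, and known improvements to exponent near $1$ use genuinely different (e.g.\ number-theoretic) methods. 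For the purposes of this paper none of this matters, since the downstream arguments use only polylogarithmic scaling, but you should either prove the exponent you actually get or state the theorem with $O(\log^c(1/\varepsilon))$ for some fixed $c$, rather than claim a bound your recursion does not reach.

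A smaller point: in the refinement step you attribute the $O(\varepsilon_{n-1}^{3/2})$ error bound to ``Lemma (i)'' (commutator shrinking), but what is actually needed there is a perturbation estimate for the group commutator map -- that replacing $V,W$ (which are $O(\sqrt{\varepsilon_{n-1}})$-close to $I$) by $\tilde V,\tilde W$ (which are $\varepsilon_{n-1}$-close to $V,W$) changes $[V,W]$ by $O(\sqrt{\varepsilon_{n-1}}\cdot\varepsilon_{n-1})$. This follows from linearizing the commutator, not from Lemma (i) as stated; it is a standard computation, but worth separating out since it is exactly where the $3/2$ in the recursion comes from.
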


\subsection{Calculating matrix entries of quantum circuits is $\#\P$-hard}
\vspace{3mm}

We first show that exponentially accurate additive approximation of a matrix entry of a quantum circuit is $\#\P$-hard. This fact appears to be folklore in the quantum computation community. In further discussions, it will be important to remember that a ``quantum circuit'' is a \emph{classical description} of a unitary operator, as a list of numbers describing the sequence of local unitary gates and which qubits they are to be applied to.

\vspace{3mm}

\begin{bb}\label{circuit-problem}
Given a quantum circuit $C$ on $n$ qubits over the gate set $\{H, X, T\}$, output a number $x$ such that $| x - \bra{0^n} C \ket{0^n} | \leq 2^{-n}.$
\end{bb}
\vspace{3mm}

\begin{thm}\label{thm:circuit-hard}
Problem 1 is $\#\P$-hard.
\end{thm}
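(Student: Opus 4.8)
The plan is to give a polynomial-time reduction from $\#$SAT, which is $\#\P$-hard, to Problem 1. Fix a 3CNF formula $\varphi$ on $k$ variables and $m$ clauses, and let $\#\varphi$ be its number of satisfying assignments. First I would build by hand a reversible circuit $V$ computing $\varphi$: on input $\ket{x}\ket{0\cdots 0}$, with $x\in\{0,1\}^k$ sitting in a ``variable register'' and the remaining $a=O(m)$ qubits serving as ancillas, $V$ produces $\ket{x}\ket{g(x)}$, where one designated ancilla qubit $o$ holds the bit $\varphi(x)$ and $g(x)$ is whatever garbage the computation leaves behind. Such a $V$ is assembled from Toffoli and NOT gates in the obvious way: each clause is a disjunction of three literals, so ``clause $j$ is violated'' is a conjunction of (possibly negated) variables, computable with a couple of Toffolis and NOTs into a scratch ancilla, and one then ANDs the negations of these violation indicators to obtain $\varphi(x)$. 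The size of $V$ is $\mathrm{poly}(|\varphi|)$.

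Next I would form the circuit $C$ on $n=k+a+1$ qubits given by
$$
C \;=\; H^{\otimes k}\circ V^{-1}\circ \mathrm{CNOT}_{o\to o'}\circ V\circ H^{\otimes k},
$$
where both $H^{\otimes k}$ factors act on the variable register and $o'$ is the one extra fresh ancilla. Tracing the action on $\ket{0\cdots 0}$: the first Hadamard layer produces $2^{-k/2}\sum_x\ket{x}\ket{0\cdots 0}$; $V$ turns this into $2^{-k/2}\sum_x\ket{x}\ket{g(x)}$ with $\varphi(x)$ stored in $o$; the CNOT copies $\varphi(x)$ onto $o'$; $V^{-1}$ uncomputes the garbage, leaving $2^{-k/2}\sum_x\ket{x}\ket{0\cdots 0}\ket{\varphi(x)}$; and the final Hadamard layer, using $\bra{0^k}H^{\otimes k}\ket{x}=2^{-k/2}$ for every $x$, gives amplitude $2^{-k}\#\{x:\varphi(x)=0\}=1-\#\varphi/2^k$ on $\ket{0\cdots 0}$. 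Thus $\bra{0\cdots 0}C\ket{0\cdots 0}=1-\#\varphi/2^k$, so from any valid output $r$ for Problem 1 on $C$ one recovers $\#\varphi=2^k-\mathrm{round}(2^k r)$ exactly, provided the total error is below $2^{-k-1}$ after rescaling by $2^k$ — which, since $n>k$, is guaranteed (pad with a constant number of untouched ancillas if one wants extra room). Building $C$ from $\varphi$ is polynomial time, and recovering $\#\varphi$ from $r$ is trivial.

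The main obstacle is that $C$ as written uses NOT gates (and a CNOT), which are not in $\{H,T\}$; moreover $\{H,T\}$ is dense only in the real-orthogonal group, not in $SU(d)$, so Theorem~\ref{thm:SolovayKitaev} applies only in its real-orthogonal variant. I would dispatch this as follows: $X$ and $\mathrm{CNOT}$, padded with identity on one or two ancilla qubits so that they act on a constant-dimensional space ($\geq 8$-dimensional) on which $\{H,T\}$ is dense, are real orthogonal transformations of determinant $1$, hence by the real version of Solovay--Kitaev each can be approximated in operator norm to precision $\varepsilon$ by a product of $\mathrm{polylog}(1/\varepsilon)$ gates from $\{H,T\}$, computable in polynomial time. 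There are $N=\mathrm{poly}(|\varphi|)$ such gates in $C$; replacing each by an $\varepsilon$-approximation with $\varepsilon=2^{-(n+1)}/N$ yields a genuine $\{H,T\}$-circuit $\widetilde C$ of size $\mathrm{poly}(|\varphi|)$ with $\|\widetilde C-C\|\leq N\varepsilon\leq 2^{-(n+1)}$, so $\bigl|\bra{0\cdots 0}\widetilde C\ket{0\cdots 0}-(1-\#\varphi/2^k)\bigr|\leq 2^{-(n+1)}$. Running Problem 1 on $\widetilde C$ and combining its error $2^{-n}$ with this bound still leaves the estimate, after rescaling by $2^k$, within less than $1/2$ of the integer $2^k-\#\varphi$ (pad with $O(1)$ extra ancillas to be safe), so $\#\varphi$ is still obtained by rounding. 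Hence a polynomial-time algorithm for Problem 1 would yield one for $\#$SAT, and Problem 1 is $\#\P$-hard.
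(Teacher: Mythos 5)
Your proof is correct and follows the same high-level strategy as the paper (reduce $\#$SAT to Problem~1 by building a circuit whose $\bra{0\cdots0}\cdot\ket{0\cdots0}$ entry is an affine function of $\#\varphi$, then round), but the circuit construction and the treatment of the gate set genuinely differ. The paper uses a phase-kickback gadget: one qubit is prepared in $\ket{1}$, Hadamarded to $\ket{-}$, the reversible evaluator $C_\varphi$ XORs $\varphi(x)$ into it so that $(-1)^{\varphi(x)}$ appears as a phase, and the final layer is $H^{\otimes n}\otimes X^\dagger H^\dagger$, giving $\bra{0\cdots0}C\ket{0\cdots0}=1-\#\varphi/2^{n-1}$. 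You instead use compute--copy--uncompute: $C=H^{\otimes k}V^{-1}(\mathrm{CNOT}_{o\to o'})VH^{\otimes k}$, giving $1-\#\varphi/2^{k}$. Both encodings work, and the rounding/error analysis is the same in spirit (your padding remark to guarantee the rescaled error stays below $1/2$ is sound). The more interesting difference is that you explicitly confront the fact that $X$, $\mathrm{CNOT}$, and in the paper's case the implicit $\ket{1}$ preparation and the $X^\dagger$ at the end, are not in $\{H,T\}$; the paper simply invokes universality in passing (``replacing the classical gates $\{\neg,\wedge,\vee\}$ with their corresponding Toffoli gate sequences'') without saying how to realize $\neg$, and it computes $\bra{0\cdots0}C_\varphi'\ket{0^n1\,0^a}$ rather than the required $\bra{0\cdots0}C\ket{0\cdots0}$. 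Your fix --- pad $X$ and $\mathrm{CNOT}$ to act on $\geq 3$ qubits (so the padded gate lands in $SO(8)$), and invoke the real-orthogonal variant of Solovay--Kitaev with $\varepsilon=2^{-(n+1)}/N$ over the dense set generated by $\{H,T\}$ --- is a valid and self-contained way to close that gap, at the cost of being somewhat heavier machinery than strictly necessary (one could instead cite the exact $\{H,T\}$ universality constructions of Shi and Aharonov). Net: correct proof, same reduction, different circuit gadget, and you patch a small omission that the paper leaves implicit.
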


\begin{proof}
We assume an oracle $\mathcal O$ which, on input a quantum circuit $C$, outputs a number $\mathcal O(C)$ satisfying the conditions of the theorem. We will show how to use $\mathcal O$ to solve $\#3$SAT in classical deterministic polynomial time. Since $\#3$SAT is $\#\P$-hard, the result will follow.
\vspace{2mm}

Let $\varphi:\{0,1\}^{n-1} \to \{0,1\}$ be a 3-CNF formula.  We first create a reversible $n$-bit circuit $C_\varphi$ for $\varphi$:
\begin{equation}\label{eq:reversible-formula-circuit}
C_\varphi:\ket{x}\ket{y} \mapsto \ket{x}\ket{y \oplus \varphi(x)}\,.
\end{equation}
This is done by expressing each of the operators $\{\neg,\land,\lor\}$ in the description of $\varphi$ by appropriate sequences of the Toffoli gate~\cite{NielsenChuang}. Since these sequences are each of constant length, the size of $C_\varphi$ is linear in the size of the description of $\varphi$. We remark that $C_\varphi$ actually needs more than just $n$ bits, in order to implement each clause. In total, $n + cn^3$ bits will suffice, where $c$ is some universal constant; the $cn^3$ additional bits must be initialized to $0$ and are not changed by $C_\varphi$. For the sake of notational simplicity, we suppress these additional bits and assume that $C_\varphi$ is precisely of the form \eqref{eq:reversible-formula-circuit}.
\vspace{2mm}

Now let $C_\varphi'$ be the following quantum circuit over $\{H, X, T\}$:
\vspace{5mm}
$$
\Qcircuit @C=1em @R=1.0em {
&\qw 	&\gate{H} &\multigate{3}{\qquad C_\varphi \qquad} 	&\gate{H} &\qw 	&\qw \\
&&\vdots&&\vdots&&&&\\
&\qw 	&\gate{H} &\ghost{\qquad C_\varphi \qquad} 		&\gate{H} &\qw 	&\qw \\
&\gate{X} &\gate{H} &\ghost{\qquad C_\varphi \qquad}		&\gate{H} &\gate{X} 	&\qw
}
$$
\vspace{5mm}

\noindent Define $\ket{-} := H X \ket{0} = \ket{0} - \ket{1}$ and note that
$$
C_\varphi : \ket{x} \ket{-} \mapsto (-1)^{\varphi(x)} \ket{x} \ket{-}\,.
$$
One also easily checks that
$$
H^{\otimes n} \ket{0^n} = \frac{1}{2^{n/2}} \sum_{x \in \{0, 1\}^n} \ket{x}\,.
$$
Using the above, we may compute the relevant matrix entry of the circuit $C_\varphi'$, as follows.
\begin{align*}
\bra{0^n} C_\varphi' \ket{0^n} 
&= \frac{1}{2^{n-1}} \sum_{y, x \in \{0, 1\}^{n-1}} \bra{y} \bra{-} C_\varphi \ket{x} \ket{-} \\
&= \frac{1}{2^{n-1}} \sum_{x \in \{0, 1\}^{n-1}} (-1)^{\varphi(x)} \sum_{y \in \{0, 1\}^{n-1}} \innerprod{y}{x} \innerprod{-}{-} \\
&= \frac{1}{2^{n-1}}\sum_{x \in \{0,1\}^{n-1}} (-1)^{\varphi(x)}\\
&= 1 - \frac{\#\varphi}{2^{n-2}}\,,
\end{align*}
where $\#\varphi = |\{ x \in \{0, 1\}^{n-1} : \varphi(x) = 1\}|$ is the number of satisfying assignments of $\varphi$. When we apply the oracle $\mathcal O$ to the circuit $C_\varphi'$, we will receive a number $x$ which is at most $2^{-n}$ away from the above. It follows that
$$
\left|\#\varphi-2^{n-2}(1-x)\right| \leq \frac{1}{4}. 
$$
We can thus calculate $\#\varphi$ exactly by finding the closest integer to $2^{n-2}(1-x)$.
\end{proof}

\section{Main results}\label{sec:results}

\subsection{Calculating the WRT invariant is $\#\P$-hard}

Our first result shows that computing a certain additive approximation of the WRT invariant is a $\#\P$-hard problem. This is essentially the exponential-accuracy version of the main theorem in~\cite{BQP2}, which states that a weaker approximation level is BQP-hard. Our proof provides many of the details left out of~\cite{BQP2}. The basic ingredients of the proof are the Solovay-Kitaev theorem (\expref{Theorem}{thm:SolovayKitaev}), the density theorem of Larsen and Wang (\expref{Theorem}{thm:density}), and \#\P-hardness of quantum circuit entries (\expref{Theorem}{thm:circuit-hard}).

\begin{bb}\label{wrt-problem}
Given a word $\alpha$ in the standard generators of MCG($\Sigma_g$), output a number $x$ such that
$$
|x - \WRT_r(M_{g, \alpha})| \leq \frac{\D^{g-1}}{2^{g+1}}.
$$
\end{bb}

\begin{thm}\label{thm:wrt-hard}
For prime $r \geq 5$, \expref{Problem}{wrt-problem} is $\#\P$-hard.
\end{thm}

\begin{proof}
We will assume an oracle $\mathcal O$ for \expref{Problem}{wrt-problem}, and show how to use it (in classical, deterministic polynomial time) to solve \expref{Problem}{circuit-problem}. The result will then follow from \expref{Theorem}{thm:circuit-hard}. 

The starting point is an $n$-qubit quantum circuit $C$, and the goal is to efficiently produce a description of a Heegaard splitting $(g, \alpha)$ whose WRT invariant is exponentially close to $\bra{0^n}C\ket{0^n}$. The genus $g$ will be equal to the number of qubits $n$. Starting from the description of $C$, we will find a word $\alpha$ in the standard generators of $\MCG(\Sigma_g)$ such that $\rho_{r, g}(\alpha)$ closely approximates $C$ on an appropriate subspace of $\mathcal H_{r, g}$. For the comparison to make sense, this subspace should be isomorphic to $\CC_2^{\otimes n}$, i.e., the space of $n$ qubits. We choose the subspace spanned by labelings of the standard spine which take the form shown in \expref{Figure}{encoding}. The bits of an $n$-qubit basis state $\ket{z_1, \dots, z_g}$ label the $g = n$ loops, with the rest of the edges labeled zero. We remark that since $(k, k, 0)$ is a valid (degenerate) triangle for any $k$, the fusion rules allow the $z_j$ to carry arbitrary labels, as desired. The result is an embedding
$\phi : \CC_2^{\otimes n} \hookrightarrow \mathcal H_{r, g}$ of Hilbert spaces, with a corresponding homomorphism $\Phi : SU(\CC_2^{\otimes n}) \hookrightarrow SU(\mathcal H_{r, g})$ of unitary groups.
\begin{figure}[h]
\begin{center}
\includegraphics[width = 0.95\textwidth]{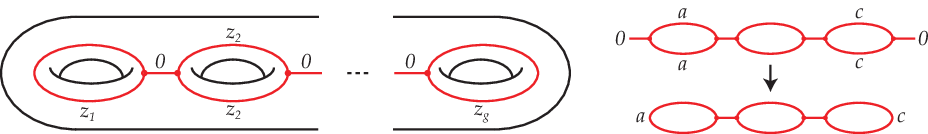}
\fcaption{\label{encoding} Encoding a $g$-qubit basis state $\ket{z_1, \dots, z_g}$ into a basis vector of $\H_{r,g}$ (left),  and the isomorphism between $\mathcal H_{r, 3}^{(0, 0)}$ and $\mathcal H_{r, 3}$ (right).}
\end{center}
\end{figure}

The circuit $C$ is completely described by some sequence of unitary gates. We will simulate each gate's action on the encoded subspace by using a sequence of Dehn twists. We remark that this cannot simply be done by applying the density theorem and Solovay-Kitaev to all of $\rho_{r, g}$. Indeed, the Solovay-Kitaev algorithm necessarily scales at least exponentially with the number of qubits~\cite{NielsenDawson}. The Dehn words produced in this way would thus have exponential length. For this reason, we will instead decompose the problem in such a way that we need only apply Solovay-Kitaev on spaces of constant dimension (relative to $n$.)

To start, let us take the first gate. It acts on a contiguous segment of at most three qubits; this segment of qubits corresponds to a segment of the $g$-torus containing three handles, say handles $1 < k, k+1, k+2 < g$. If we cut out this segment as in \expref{Figure}{torus-cutout}, we get a corresponding decomposition of Hilbert spaces:
\begin{equation}\label{eq:cut-space-decomposition}
\mathcal H_{r, g} = \bigoplus_{(a, b) \in L} \mathcal H_{r, k-1}^{(a)} \otimes \mathcal H_{r, 3}^{(a, b)} \otimes \mathcal H_{r, g-k-2}^{(b)}\,.
\end{equation}
\begin{figure}[h]
\begin{center}
\includegraphics[width = 0.95\textwidth]{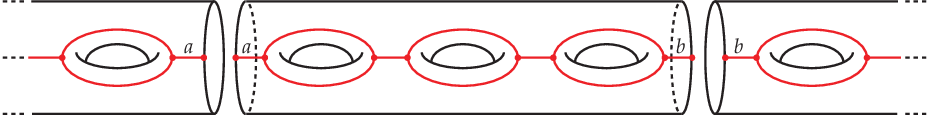}
\fcaption{\label{torus-cutout} Each 3-qubit gate in the circuit corresponds to some three-handled segment of $\Sigma_g$.}
\end{center}
\end{figure}
Due to our encoding, it suffices to consider only the subspace corresponding to $a = b = 0$. The first gate, under the encoding, is some particular element $\Phi(U_1) \in SU(\mathcal H_{r, 3}^{(0, 0)})$. Observe that $\mathcal H_{r, 3}^{(0, 0)}$ is isomorphic to $\mathcal H_{r, 3}$, as shown in \expref{Figure}{encoding}. Any basis vector of the former has two ``free'' edges, one on the left and one on the right, with both carrying the zero label. Consider one of these (say the left.) Due to the fusion rules, the two adjacent edges (which together form a loop around the leftmost handle) must carry the same label. The isomorphism removes the free edge, and then joins these two edges into a single loop carrying that label; the same operation is performed on the right side. Due to this isomorphism, we can assume that in fact $\Phi(U_1) \in SU(\mathcal H_{r, 3})$. By the density theorem (\expref{Theorem}{thm:density}) and Solovay-Kitaev theorem (\expref{Theorem}{thm:SolovayKitaev}), there is a word $\alpha_1 \in \MCG(\Sigma_3)$ of length polynomial in $n$, such that
$\| \rho_{r, 3}(\alpha_1) - \Phi(U_1) \| \leq |C|^{-1}2^{-n-1}$. By the decomposition \eqref{eq:cut-space-decomposition} and the embedding $\MCG(\Sigma_3) \hookrightarrow \MCG(\Sigma_g)$ which sends the three handles to positions $k, k+1, k+2$, we can instead write
$$
\left\| \rho_{r, g}(\alpha_1) - \Phi(C_1) \right\| \leq |C|^{-1}2^{-n-1}\,,
$$ 
where $C_1 = \one_{k-1} \otimes U_1 \otimes \one_{g-k-2}$. We will adopt this notation for the remaining operators $C_j$ comprising $C$, so that we can simply write $C = C_{|C|} \cdots C_1$, as operators in $SU(\CC_2^{\otimes n})$.

We now apply the above process to each $C_j$, and compose the resulting words $\alpha_j$ to get a word $\alpha = \alpha_{|C|} \alpha_{|C|-1} \cdots \alpha_2 \alpha_1$ in the generators of $\MCG(\Sigma_g)$. We then have
$$
\left\| \rho_{r, g}(\alpha) - \Phi(C) \right \| \leq \sum_{j=1}^{|C|} \left\| \rho_{r, g}(\alpha_j) - \Phi(C_j) \right \| \leq \frac{1}{2^{n+1}}\,.
$$
Note that we have maintained zero labels on all of the edges of the spine not involved in the encoding. From that and the above, it follows that
\begin{equation}\label{eq:circuit-approx}
\left| \bra{0} \rho_{r, g}(\alpha)\ket{0} - \bra{0}C\ket{0}\right| 
= \left| \bra{0} \rho_{r, g}(\alpha)\ket{0} - \bra{0} \Phi(C) \ket{0} \right| 
= \left| \bra{0} \rho_{r, g}(\alpha) - \Phi(C) \ket{0}\right| 
\leq \frac{1}{2^{n+1}}\,.
\end{equation}

We are now ready to solve \expref{Problem}{circuit-problem} for the circuit $C$. We first input the Heegaard splitting $(g, \alpha)$ into the WRT oracle $\mathcal O$, and receive a number $x$; we then output $\mathcal D^{1-g}x$. To show that this solves \expref{Problem}{circuit-problem}, note that
\begin{equation}\label{eq:final-approx}
\left| \mathcal D^{1-g} x - \bra{0} \rho_{r, g}(\alpha) \ket{0} \right| \leq \frac{1}{2^{n+1}}
\end{equation}
by the definition of \expref{Problem}{wrt-problem}. By the triangle inequality applied to the above and \eqref{eq:circuit-approx}, we get
$$
\left| \mathcal D^{1-g} x - \bra{0} C \ket{0} \right| \leq \frac{1}{2^{n}}
$$
as desired.

\end{proof}

For the main theorem of the paper, we will actually only need the fact that exact calculation of the WRT invariant is $\#\P$-hard. This is an immediate corollary of the above, stated as \expref{Theorem}{thm:intro1} above.

\begin{bb}\label{problem:exact-wrt}
Given a sequence of generators $w = (w_1, \dots, w_k)$ of MCG($\Sigma_g$), output $\WRT(M_{g, w})$.
\end{bb}

\begin{thm-hand}[1.1]\label{cor:wrt-hard}
Problem \ref{problem:exact-wrt} is $\#\P$-hard.
\end{thm-hand}

\subsection{Implications for Heegaard splittings and the $r$-distance}

Our main result applies the \#\P-hardness of the WRT invariant to prove the existence of a family of $3$-manifold diagrams which is of interest in understanding the $r$-distance. This result is an analogue of Freedman's main theorem~\cite{Freedman1}, in the setting of Heegaard splittings and $3$-manifolds.

\begin{thm-hand}[1.3]\label{thm:main}
Assume $\#\P \not\subseteq \FP^{\NP}$, and choose prime $r \geq 5$. For any $\RR^+$-valued polynomial $p$ in one variable, there exists an infinite family of Heegaard splittings $(g_j, \alpha_j)$ with the following property: for any family $(h_j, \beta_j)$ satisfying $\dist_r((g_j, \alpha_j), (h_j, \beta_j)) \leq O(p(g_j))$, it is the case that $h_j \in \Omega(\log(g_j))$.
\end{thm-hand}

\begin{proof}
We begin by assuming that the theorem conclusion is false. It then follows that that there exists a polynomial $p$ and constants $c_1, c_2$ such that the following holds for all but finitely many Heegaard splittings: for any $(g, \alpha)$ there exists $(g', \alpha')$ such that 
\begin{equation}\label{eq:contradiction}
\dist_r((g, \alpha), (g', \alpha')) \leq c_1p(g)
\qquad \text{and} \qquad
g' < c_2 \log(g)\,.
\end{equation}
We will show that this implies an $\FP^{\NP}$-algorithm $\mathcal A$ for \expref{Problem}{problem:exact-wrt}, i.e., for exactly calculating the WRT invariant. Since $\FP^{\NP}$ is closed under polynomial-time reductions, and exact WRT is $\#\P$-hard (\expref{Theorem}{thm:wrt-hard}), it will follow that $\#\P \subseteq \FP^{\NP}$, a contradiction. 

We remark that, since we are only interested in the $\#\P$-hard case, we may assume that $|\alpha|$ (and by \eqref{eq:contradiction} also $|\alpha'|$) is polynomial in $g$. Indeed, in the proof of \expref{Theorem}{thm:wrt-hard}, the genus corresponds to the number of variables in a 3CNF formula, and the word $\alpha$ is produced from a description of the formula itself, which is at most polynomial in the number of variables.

Since it is an $\FP^{\NP}$-algorithm, $\mathcal A$ can make use of a polynomial-size ``witness'' in order to calculate WRT exactly, in classical deterministic polynomial time. For calculating $\WRT(g, \alpha)$, the witness will be a description of the sequence of diagram moves (handle-slides, stabilizations, and $r$-moves) necessary to transform $(g, \alpha)$ to the associated splitting $(g', \alpha')$ from \eqref{eq:contradiction}. By the definition of $r$-distance, this sequence admits a description of length polynomial in $g$. It thus suffices to show that $\mathcal A$ can verify the correctness of each move in the sequence in polynomial time. This is fairly straightforward:
\begin{enumerate}
\item a handle-slide is described by a pair $(\gamma, \gamma')$ of words in the generators; to check validity, it suffices to check that both $\gamma$ and $\gamma'$ describe elements of the handlebody subgroup; this can be done in time polynomial in $|\gamma| + |\gamma'|$~\cite[Theorem 6.4]{Sch};
\item stabilization and destabilization are described by a single bit (e.g., $0$ to stabilize, $1$ to destabilize); to check validity, we need only check that the word has the appropriate form, i.e., \eqref{eq:stabilize} or \eqref{eq:destabilize};
\item the $r$-move is always valid, and is described by a generator index $j$ and another index indicating where $\sigma_j^{2r}$ should be inserted; an inverse $r$-move simply indicates the position of a subword of the form $\sigma_j^{2r}$, whose existence is trivial to check.
\end{enumerate}
The final verification step is simply a string comparison between $(g', \alpha')$ and the splitting produced by applying the witness sequence to $(g, \alpha)$. By \expref{Theorem}{thm:r-distance}, this also verifies that $\WRT_r(g, \alpha) = \WRT_r(g', \alpha')$.

The final step is to show how $\mathcal A$ can calculate $\WRT_r(g', \alpha')$ in time polynomial in $g$. By the second property in \eqref{eq:contradiction}, the dimension of the relevant Hilbert space satisfies
$$
\mathcal H_{g', r} < r^{3g'} \leq r^{3 c_2 \log(g)} \in O(\text{poly}(g))\,,
$$
where the first inequality comes from counting spine edges. All of the pieces of data associated to $r$ and described in \expref{Section}{sec:wrt} can be written down explicitly and exactly in time and space which does not depend on the genus or the word length. For each standard generator $\sigma_j$ of $\MCG(\Sigma_g)$, the matrix $\rho_{g', r}(\sigma_j)$ is some product of a diagonal matrix with a constant number of $S$-matrices and $F$-matrices. It follows that the entire matrix for each generator can be computed explicitly and exactly in time polynomial in the dimension of $\mathcal H_{g', r}$. We can thus compute the matrix $\rho_{g', r}(\alpha')$ by computing and then multiplying together the relevant $|\alpha'|$-many matrices, which takes time $O(\text{poly}(g, |\alpha'|)) = O(\text{poly}(g))$. We then simply output the (scaled) matrix entry corresponding to the WRT invariant. This completes the description of the algorithm $\mathcal A$, and our proof.
\end{proof}

\section{Extensions}\label{sec:tv}

\subsection{Strengthening the hardness result}\label{sec:WRTapprox}

We now show that any value-distinguishing approximation of the WRT invariant is still $\#\P$-hard. Our proof is an adaptation of Kuperberg's proof of the same fact for the Jones polynomial~\cite{Kuperberg}. Since the basic steps of our proof are essentially the same as that of Kuperberg, we omit some details. The basic ingredients still include the Solovay-Kitaev theorem (\expref{Theorem}{thm:SolovayKitaev}) and the Larsen and Wang density result (\expref{Theorem}{thm:density}); in addition, we will now also need a strengthening of \expref{Theorem}{thm:circuit-hard}, namely Aaronson's result that PostBQP = PP~\cite{AaronsonPostBQP}.

Before we begin, we need a few new pieces of notation. We only give brief descriptions here, since a thorough explanation would go well beyond the scope of the paper.
\begin{itemize}
\item Let $\{0,1\}^*$ denote the set of bitstrings of arbitrary length. A function $f : \{0,1\}^* \rightarrow \RR^+$ is said to have a \emph{multiplicative approximation} if there exists an algorithm $\mathcal A$ which, on input $x$, outputs a number $\mathcal A(x)$ that is within some bounded factor of $f(x)$, in time polynomial in $|x|$.
\item A function $f : \{0,1\}^* \rightarrow \RR^+$ is said to have a FPTEAS (fully polynomial-time exponential approximation scheme) if there exists an algorithm $\mathcal A$ which, on input $x$ and $\epsilon > 0$, outputs a number $\mathcal A(x)$ that is within a $1 + \epsilon$ factor of $f(x)$, in time polynomial in $|x|$ and $\log ( 1 / \epsilon )$.
\item The class of decision problems solvable by quantum algorithms in polynomial time with bounded error is denoted BQP. We can significantly strengthen this class by adding the power to \emph{post-select} on events which have inverse-exponential probability. Take the following example ``post-selected algorithm'': we execute a quantum circuit, and measure all the qubits; conditioned on the first qubit measuring to $0$ (regardless of its probability), we output the second qubit. The class of problems solvable by such algorithms is called PostBQP. 
\item PP is the class of decision problems solvable by classical polynomial-time algorithms with error probability at most $1/2$. It is a significant strengthening of the class BPP of efficient polynomial-time probabilistic algorithms, which can err with probability strictly bounded away from $1/2$ (which then permits efficient amplification by repetition and majority-voting.) An important result of Aaronson states that PostBQP = PP~\cite{AaronsonPostBQP}. PP is also closely related to $\#\P$: for example, PP-hardness implies $\#\P$-hardness~\cite{Kuperberg}.
\end{itemize}
We are now ready to prove the strengthening of our hardness result.

\begin{thm-hand}[1.2]\label{thm:main2}
Let $r \geq 5$ be prime, and $0 < a < b$ real. Given an integer $g$, a word $\alpha$, and a promise that $|\WRT_r(M_{g, \alpha})| < a$ or $|\WRT_r(M_{g, \alpha})| > b$, it is $\#\P$-hard to decide which is the case.
\end{thm-hand}
\begin{proof}
We first show that multiplicative approximation of $|\WRT_r(M_{g, \alpha})|$ is $\#\P$-hard. The starting point is to observe that the proof of \expref{Theorem}{thm:wrt-hard} can be adapted to show the following fact. Given a quantum decision algorithm $\mathcal A$ and an input $x$, there exists a Heegaard splitting $(g, \alpha)$ such that the acceptance probability $r(x)$ of $\mathcal A$ satisfies
\begin{equation}\label{eq:FPTEAS}
r(x) \approx \frac{|\WRT_r(M_{g, \alpha})|^2 }{|\mathcal D|^{2g}}\,,
\end{equation}
where $\approx$ means well-approximated in the sense of FPTEAS, i.e., to within a factor $1 + \epsilon$ in time polynomial in $|x|$ and $- \log(\epsilon)$. To prove this, one first shows that the circuit of $\mathcal A$ (which is to be applied to $x$) can be transformed into another circuit $C$ such that the probability of acceptance is encoded in $|\bra{0^n} C \ket{0^n}|^2$ (Proposition 2.3 of~\cite{Kuperberg}.) One then follows the steps of \expref{Theorem}{thm:wrt-hard} (but now with the more general circuit $C$, rather than the SAT-motivated circuit) to find the suitable $g$ and $\alpha$. By Proposition 2.14 in~\cite{Kuperberg}, multiplicative approximation of the acceptance probability of quantum algorithms (i.e., the quantity $r(x)$) is PostBQP-hard; by \eqref{eq:FPTEAS}, multiplicative approximation of $|\WRT_r(M_{g, \alpha})|$ is also PostBQP-hard. Aaronson's theorem~\cite{AaronsonPostBQP} (PostBQP = PP) and Proposition 2.1 in ~\cite{Kuperberg} (PP-hard implies $\#\P$-hard) complete the argument.

Next, we want to extend the proof to handle any value-distinguishing approximation. Choose $0 < a < b$ real. By Lemma 2.12 in ~\cite{Kuperberg}, PostBQP-hardness of the right-hand side of \eqref{eq:FPTEAS} implies that there exists a polynomial $p$ such that for any constant $c > 1$ of our choosing, it is $\#\P$-hard to decide which of these two inequalities hold:
\begin{equation}\label{eq:WRT-approx-2}
\frac{|\WRT_r(M_{g, \alpha})|^2 }{|\mathcal D|^{2g}} 
\qquad
\begin{cases} 
&< 2^{-p(|x|)}\\
&> c 2^{-p(|x|)} \,.
\end{cases}
\end{equation}
We will manipulate the splitting $(g, \alpha)$ to produce another splitting $(g', \alpha')$, such that the above fact will imply $\#\P$-hardness of deciding the gap $a < b$ for $|\WRT_r(M_{g', \alpha'})|$. Note that, almost by definition, $\WRT_r$ is multiplicative under taking connected sums (for a proof, see Section 5 in~\cite{KirbyMelvin91}.) Our manipulations will always consist of attaching many copies of a constant-size Heegaard splitting via connected sum. This will allow us to either increase or decrease the absolute value of the invariant in order to turn the interval of approximation in \eqref{eq:WRT-approx-2} into the interval we want, i.e., $(a, b)$. Recall that $|\mathcal D| > 1$ and that $|x|$ and $g$ are polynomially-related, since each bit of $x$ requires a constant number of handles in the encoding from the proof of \expref{Theorem}{thm:wrt-hard}. 

First, if $|\mathcal D|^{2g} \ll 2^{p(|x|)}$ in the limit, then $|\mathcal D|^{2g} 2^{-p(|x|)} \rightarrow 0$. We thus make $M_{g', \alpha'}$ by attaching $m = \text{poly}(|x|)$ copies of the genus-one Heegaard splitting of the three-sphere to the initial manifold $M_{g, \alpha}$. We choose $m$ sufficiently large that $|\mathcal D|^{2g + 2m} 2^{-p(|x|)}$ is asymptotically bounded by a constant. Applying \eqref{eq:WRT-approx-2}, we now have the decision problem
$$
|\WRT_r(M_{g', \alpha'})|^2 =
|\WRT_r(M_{g, \alpha})|^2 |\WRT_r(S^3)|^{2m}
\qquad
\begin{cases} 
&< |\mathcal D|^{2g + 2m} 2^{-p(|x|)}\\
&> c |\mathcal D|^{2g + 2m} 2^{-p(|x|)}
\end{cases}
$$
for any constant $c$ of our choosing. We can now clearly select $c$ so that deciding the $(a, b)$ gap for $M_{g', \alpha'}$ allows us to decide \eqref{eq:WRT-approx-2}.

On the other hand, if $|\mathcal D|^{2g} \gg 2^{p(|x|)}$ in the limit, then we attach $m$ copies of a two-genus splitting $(2, \beta)$ where $|\WRT_r(2, \beta)|$ is a small constant, say $|\mathcal D|^{-1}$. Now $m$ is chosen sufficiently large that 
$$
|\mathcal D|^{2g} 2^{-p(|x|)} |\WRT_r(2, \beta)|^{2m}
$$ 
is asymptotically bounded, and we pick $c$ as before. To find a suitable $\beta$, we can pick any unitary $U \in SU(\mathcal H_{r, 2})$ whose matrix entry $|\bra{0}U\ket{0}|$ is sufficiently small, and then apply the Solovay-Kitaev theorem to find the corresponding $\beta$.
\end{proof}

\subsection{Turaev-Viro invariants and triangulations}

We now sketch out how to adapt our results to the case of Turaev-Viro invariants and triangulated 3-manifolds. Recall that the Turaev-Viro (or TV) invariant is also parameterized by an integer $r$, and can be defined using the same data described in \expref{Section}{sec:wrt}. The value of TV is given by the contraction of a certain tensor network constructed by placing $F$-tensors on each tetrahedron in the triangulation. The six indices of each copy of the $F$-tensor are associated to the six edges of the corresponding tetrahedron. The value is thus a sum over all labelings of the edge set of the triangulation by elements of $L$. More precisely, for a triangulation $T$ with vertex set $V$, we set
\begin{equation} \label{eq:turaev-viro}
\hspace{-0.5ex} \TV_r(T) = 
\mathcal D^{-2 |V|}\hspace{-0.2ex} \sum_{\textrm{labelings}} \hspace{0.4ex} \prod_{\textrm{edges}} d_e \prod_{\textrm{tetrahedra}} \frac{F^{ijm}_{kln}}{\sqrt{d_m d_n}}
\enspace 
\end{equation}
It is well-known that the Turaev-Viro invariant of a manifold is equal to the squared-modulus of the WRT invariant~\cite{BQP1, TV}.

We now sketch out how to adapt the proof of $\#\P$-hardness from WRT to TV. This can be strengthened to show $\#\P$-hardness of approximation up to a constant, just as was done for WRT in \expref{Section}{sec:WRTapprox}.

\begin{thm}\label{tv-problem}
Let $r \geq 5$ be prime, and $0 < a < b$ real. Given a triangulation $T$, and a promise that $|\TV_r(T)| < a$ or $|\TV_r(T)| > b$, it is $\#\P$-hard to decide which is the case.
\end{thm}
\begin{proof}
(Sketch.) In \expref{Theorem}{thm:wrt-hard}, we showed how to map a 3CNF formula $\varphi$ to a Heegaard splitting, such that the value of WRT closely approximates a positive, real quantity from which we can easily surmise $\#\varphi$. It's not hard to see that having the square of this quantity also suffices for computing $\#\varphi$. The proof of \expref{Theorem}{thm:intro3} carries over as well, the only change being that we are now considering the squared quantity.

The remaining step is thus to show how to efficiently triangulate a Heegaard splitting $(g, \alpha)$. This can be done by first triangulating the two handlebodies, in such a way that Dehn twists along all the canonical curves can be applied via flip moves (i.e., Pachner 2-2 moves~\cite{Pachner}). Such a handlebody triangulation can be created using only a constant number of tetrahedra per handle. To attach the handlebodies, we need a triangulation of the mapping cylinder of $\alpha \in \MCG(\Sigma_g)$. The cylinder can be constructed in layers, with one layer for each generator appearing in the word $\alpha$. Each layer then corresponds to a single Dehn twist, which can be performed by a sequence of flip moves. Each flip move corresponds to a single tetrahedron in the layer (recall that a flip move can be implemented by gluing one tetrahedron onto the two relevant triangles.) Further details on triangulating Heegaard splittings are given in~\cite{BQP3}.
\end{proof}

To adapt \expref{Theorem 1.3}{thm:main} to triangulations, we will need a notion of $r$-distance and a notion of girth. The $r$-distance is defined by local moves. The homeomorphism-preserving moves are the two 3-dimensional Pachner moves, the 3-2 move and the 1-4 move~\cite{Pachner}. The $r$-move is defined as before, by allowing the insertion of (triangulated) Dehn twists of order $2r$. As before, the details of the moves themselves are not so important, so long as they all preserve the Turaev-Viro invariant and can be verified efficiently. In the case of our moves, this follows from Pachner's result and the fact that $\TV = |\WRT|^2$.

The girth of a triangulation is defined as the maximum width of the thinnest ordering, i.e.,
$$
g(T) = \min_{\textrm{orderings of $T$}} \hspace{0.4ex} \max_{1 \leq k \leq |T|} \left| \partial \left( \cup_{j=1}^k T_j\right) \right|\,.
$$
where ``ordering of $T$'' means an ordering of the set of tetrahedra, and $|\partial T'|$ denotes the number of triangles on the boundary of $T'$. With these definitions in place, we are ready to state our final theorem.

\begin{thm}
Assume $\#\P \not\subseteq \FP^{\NP}$, and choose prime $r \geq 5$. For any $\RR^+$-valued polynomial $p$ in one variable, there exists an infinite family of $3$-manifold triangulations $T_j$ with the following property: for any triangulation family $G_j$ satisfying $\dist_r(T_j, G_j) \leq O(p(|T_j|))$, it is the case that $g(G_j) \in \Omega(\log(|T_j|))$.
\end{thm}
\begin{proof}(Sketch.)
The proof is essentially the same as before. For a contradiction, we assume the conclusion of the theorem to be false, and conclude that there is an FP$^{\NP}$-algorithm for the Turaev-Viro invariant. The polynomial-size witness now consists of two parts: first, the sequence of Pachner and $r$-moves that map $T_j$ to $G_j$, and second, an ordering of $G_j$ which has the promised girth $O(\log(|T_j|))$. The first part of the witness enables us to verify that $\TV_r(T_j) = \TV_r(G_j)$ in polynomial time. The second part ensures that $\TV_r(G_j)$ can be computed exactly in polynomial time, by contracting the tensor network in the specified order. At any given step of the contraction, the number of free indices is proportional to the number of triangles on the surface, which is logarithmic in $|T_j|$. The total dimension of the space is thus polynomial throughout.
\end{proof}

\vspace{2mm}
\section*{Acknowledgements}
\noindent
CL was supported by Caltech's Summer Undergraduate Research Thomas Lauritsen Fellowship, as well as by John Preskill with NSA/ARO grant W911NF-09-1-0442 and the Institute for Quantum Information and Matter (IQIM), an NSF Physics Frontiers Center with support from the Gordon and Betty Moore Foundation. Much of this work was completed while GA was a postdoctoral scholar at IQIM. GA acknowledges financial support from the European Research Council (ERC Grant Agreement no 337603), the Danish Council for Independent Research (Sapere Aude) and VILLUM FONDEN via the QMATH Centre of Excellence (Grant No. 10059). We thank the authors of \cite{BQP3, BQP1} for some of the diagrams used in this paper. We are thankful to Greg Kuperberg for comments on an earlier version.

$ $

\bibliographystyle{plain}

\bibliography{references}

\end{document}